\DeclareMathOperator{\polylog}{polylog}
\newcommand{\bigo}[1]{\mathcal{O}#1}
\begin{document}

\hyphenation{pro-blems}
\pagestyle{headings}  
\addtocmark{Adaptive Computation of the Klee's Measure in High Dimensions}

\mainmatter              
\title{Adaptive Computation of   the Klee's Measure\\ in High Dimensions}
\titlerunning{Adaptive Computation of the Klee's Measure\\ in High Dimensions}
\author{
  J\'er\'emy Barbay\inst{1} 
  \and
  Pablo P\'erez-Lantero\inst{2}
  \and 
  Javiel Rojas-Ledesma\inst{1}\thanks{Corresponding Author}
}

\institute{
  Departmento de Ciencias de la Computaci\'on, Universidad de Chile, Chile\\ 
  \email{jeremy@barbay.cl, jrojas@dcc.uchile.cl.}
  \and
  Escuela de Ingenier\'ia Civil Inform\'atica, Universidad de Valpara\'iso, Chile.\\ 
  \email{pablo.perez@uv.cl.}
}


\maketitle

\begin{abstract} 
The \textsc{Klee's Measure} of $n$ axis-parallel boxes in $\mathbb{R}^d$ is the volume of their union. It can be computed in time within $\bigo(n^{d/2})$ in the worst case. We describe three techniques to boost its computation: one based on some type of ``degeneracy'' of the input, and two ones on the inherent ``easiness'' of the structure of the input. 
The first technique benefits from instances where the \textsc{Maxima} of the input is of small size $h$, and yields a solution running in time within $\bigo(n\log^{2d-2}{h}+ h^{d/2}) \subseteq \bigo(n^{d/2}$).  
The second technique takes advantage of instances where no $d$-dimensional axis-aligned hyperplane intersects more than $k$ boxes in some dimension, and yields a solution running in time within $\bigo(n \log n + n k^{(d-2)/2}) \subseteq \bigo(n^{d/2})$.
The third technique takes advantage of instances where the \emph{intersection graph} of the input has small treewidth $\omega$. It yields an algorithm running in time within $\bigo(n^4\omega \log \omega + n (\omega \log \omega)^{d/2})$ in general, and in time within $\bigo(n \log n +  n \omega ^{d/2})$ if an optimal tree decomposition of the intersection graph is given.
We show how to combine these techniques in an algorithm which takes advantage of all three configurations.
\end{abstract}

   \section{Introduction}\label{sec:Introduction}				  

The \textsc{Klee's Measure} of a set of $n$ axis-parallel boxes in $\mathbb{R}^d$ is defined as the volume of the union of the boxes in the set~\cite{Bentley1977}. Its computation was first posed by Victor Klee in 1977~\cite{Klee1977}, who originally considered the measure for intervals in the real line. Bentley~\cite{Bentley1977} generalized the problem to $d$ dimensions and described an algorithm running in time within $\bigo(n^{d-1}\log n)$. Several years later, Overarms and Yap~\cite{Overmars1991} described a solution running in time within $\bigo(n^{d/2}\log n)$, which remained essentially unbeaten for more than 20 years until 2013, when Chan~\cite{Chan2013} presented an algorithm running in time within $\bigo(n^{d/2})$. We consider that, additionally, a $d$-dimensional domain box $\Gamma$ is given, making the objective to compute the \textsc{Klee's Measure} within $\Gamma$.

Some special cases of this problem have been studied, such as the \textsc{Hypervolume} problem, where boxes are orthants of the form $\{(x_1,\ldots,x_d) \in \mathbb{R}^d \mid (x_1~\le~\alpha_1)  \land \ldots \land (x_d~\le~\alpha_d) \}$, each $\alpha_i$ being a real number, which can be solved in time within $\bigo(n^{d/3}\polylog n)$;
and \textsc{Cube-KMP}~\cite{Agarwal2010,Bringmann2012}, when the boxes are hypercubes, which can be solved in running time within $\bigo(n^{(d+1)/3}\polylog n)$~\cite{Chan2013}.
\begin{LONG}
Yildiz and Suri\cite{Yildiz2012} considered \textsc{$k$-Grounded-KMP}, the case when the projection of the input boxes to the first $k$ dimensions is an instance of  \textsc{Hypervolume}. They described an algorithm to solve \textsc{$2$-Grounded} in time within $\bigo(n^{(d-1)/2} \log^2 n)$, for any dimension $d \ge 3$.
\end{LONG}

The best lower bound known for the computational complexity of the \textsc{Klee\rq{}s Measure} problem in the worst case over instances of size $n$ is within $\Omega(n \log n)$, so far tight only for dimensions one and two~\cite{Fredman78} as the best known upper bound is $\bigo(n^{d/2})$ in dimension $d$. Chan~\cite{Chan08} conjectured that no \lq{}purely combinatorial\rq{} algorithm computing the  \textsc{Klee\rq{}s Measure} in dimension $d$ exists running in time within $\bigo(n^{d/2-\varepsilon} )$ for any $\varepsilon > 0$.  
\begin{LONG}
He proved that if the $d$-dimensional \textsc{Klee\rq{}s Measure} problem can be solved in time $T_d(n)$, then one can decide whether an arbitrary $n$-vertex graph $G = (V , E)$ contains a clique of size $d$ in  time within $\bigo(T_d(\bigo(n^2)))$. The current best combinatorial algorithm for finding $k$-cliques in a graph, requires near-$\bigo(n^k)$ time, and hence the conjecture.
\end{LONG}

In an adaptive analysis, the cost of an algorithm is measured as a function of, not just the input size, but of other parameters that capture the inherent simplicity or difficulty of an input instance~\cite{Afshani2009}. An algorithm is said to be adaptive if ``easy'' instances are solved faster than the ``hard'' ones. There are adaptive algorithms to solve classical problems such as \textsc{Sorting} a permutation~\cite{MoffatP92}, \textsc{Sorting} a multiset~\cite{Barbay2013}, computing the \textsc{Convex Hull}~\cite{Kirkpatrick1986} of a set of points in the plane and in 3-space, and computing the \textsc{Maxima} of a set of $d$-dimensional vectors~\cite{Kirkpatrick1985}. There are also adaptive algorithms for the \textsc{Maximum Weight Box} problem~\cite{BarbayCNP14}, of particular interest since, for any dimension $d \ge 2$, the \textsc{Maximum Weight Box} problem can be reduced to an instance of the \textsc{Klee\rq{}s Measure} problem in $2d$ dimensions. 

Even though the asymptotic complexity of $\bigo(n^{d/2})$ is the best known so far for the \textsc{Klee's Measure} problem~\cite{Chan2013}, there are many cases which can be solved in time within $\bigo(n\lg n)$ (see Figures~\ref{fig:degeneracy}~and~\ref{fig:degeneracy_comp} for some examples).  Some of those ``easy'' instances can be mere particular cases, some others can be hints of some hidden measures of difficulty of the \textsc{Klee's Measure} problem. 
 
 \paragraph{\textbf{Hypothesis}:}There are such difficulty measures that gradually separate instances of the same size $n$ into various classes of difficulty; from easy ones solvable in time within $\bigo(n \log n)$, to difficult ones which the best known algorithm solves in time within $\bigo(n^{d/2})$.

\begin{figure}
\centering
\begin{subfigure}{.45\textwidth}
  \centering
  \includegraphics[height=4cm]{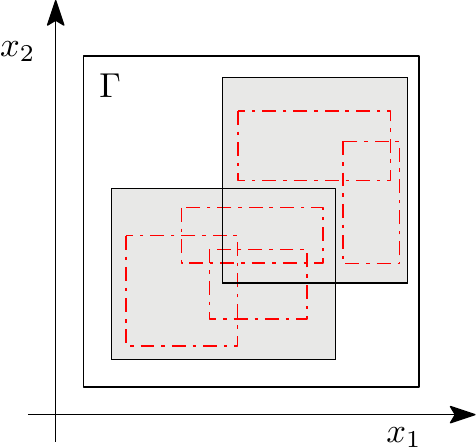}
  \caption{ }
  \label{fig:degeneracy_boxes}
\end{subfigure}
\begin{subfigure}{.45\textwidth}
  \centering
  \includegraphics[height=4cm]{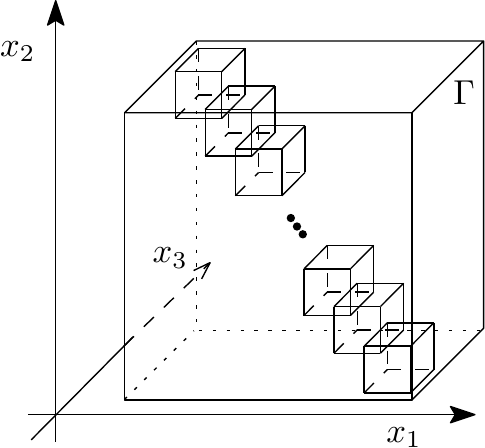}
  \caption{ }
  \label{fig:degeneracy_orthants}
\end{subfigure}
\caption{Two \lq{}easy\rq{} instances of the \textsc{Klee's Measure} problem: red dashed boxes in (a) can be removed without affecting the \textsc{Klee's Measure} within the domain $\Gamma$, yielding a much smaller instance to solve; while the instance in (b) belongs to the family of instances which intersection graph is a tree (a path in this particular case), that can be solved in time within $\bigo(n \log n)$ by a divide-and-conquer algorithm.}
\label{fig:degeneracy}
\end{figure}

\paragraph{\textbf{Results}:}

We describe three techniques to boost the computation of the \textsc{Klee's Measure} on ``easy'' instances, and analyze each in the adaptive model. For each technique, we identify a proper difficulty measure, which models the features which the technique is taking advantage of.  The first technique is the simplest, taking advantage of degenerated instances, while the second and third ones are more sophisticated.

The first technique (described in Section~\ref{sec:maximaFiltering}) is related to a classical problem in Computational Geometry: the computation of the \textsc{Maxima} of a set of vectors. A vector in a set $\mathcal{T} \subset \mathbb{R}^d$ is called \emph{maximal} if none of the remaining vectors in $\mathcal{T}$ dominates it in every component. The \textsc{Maxima} of $\mathcal{T}$ (denoted by $M(\mathcal{T})$) is the set of maximal elements in $\mathcal{T}$. In 1985, Kirkpatrick and Seidel~\cite{Kirkpatrick1985} gave an output-size sensitive algorithm for this problem, running in time within $\bigo(n \log^{d-2}{h})$, where $h$ is the size of the \textsc{Maxima}.  We extend the concept of \textsc{Maxima} to the sets of boxes, and describe an algorithm computing the \textsc{Klee\rq{}s Measure} in time within $\bigo(n\log^{2d-2}{h}+ h^{d/2}) \subseteq \bigo(n^{d/2})$, where $h$ denotes the size of the \textsc{Maxima} of the input set.

The second technique (described in Section~\ref{sec:intrinsic}) is based on the \emph{profile} $k$ of the input set, which D'Amore et al. \cite{dAmoreNRW95} defined as the minimum, over all dimensions $i$, of the maximum number of boxes intersected by a same axis aligned hyperplane orthogonal to $i$.  The algorithm described by Chan~\cite{Chan2013} to compute the \textsc{Klee's Measure} is in fact sensitive to a difficulty measure based on a slightly weaker definition of the profile. We improve on this by describing an algorithm to compute the \textsc{Klee's Measure} in time within $\bigo(n \log n + nk^{(d-2)/2})\subseteq \bigo(n^{d/2})$, where $k$ is the profile of the input set. 

The third technique (described in Section~\ref{sec:intersectionGraphTreewidth}) is based on the \emph{treewidth} of the \emph{intersection graph} of the input set. The intersection graph of a set of boxes is a graph $G$ where the vertices are the boxes, and where two boxes are connected by an edge if and only if they intersect. The treewidth $\omega$ of a graph measures how ``close'' the graph is to a tree. This technique yields a solution running in time within $\bigo(n \log n +  n \omega ^{d/2})$ if a tree decomposition of the intersection graph of the input set is given; and a solution running in time within $\bigo(n^4\omega \log \omega + n (\omega \log \omega)^{d/2})$ if only the boxes are given.

In Section~\ref{sec:combining}, we discuss how to compare and combine these three techniques, and in Section~\ref{sec:discussion} we describe some potential directions for future work. 

\section{Maxima Filtering}\label{sec:maximaFiltering}

Our first technique considers the \textsc{Maxima} of the input set of boxes to take advantage of instances where many boxes can be ``filtered out'' in small time.  
\begin{SHORT}The technique is simple:  we describe it shortly for completeness (see the extended version~\cite{BarbayPR15} for more details).\end{SHORT}
A box in a set $\mathcal{B}$ is called \emph{maximal} if none of the remaining boxes in $\mathcal{B}$ completely contains it. The \textsc{Maxima} $M(\mathcal{B})$ of $\mathcal{B}$ is the set of maximal elements in $\mathcal{B}$. One can observe that, by definition, elements not in the \textsc{Maxima} of an input set of the \textsc{Klee's Measure} problem can be removed from the input set without affecting the value of the \textsc{Klee's Measure}. 

\begin{LONG}
Algorithm~\ref{alg:maximaadpt} takes advantage of this fact to compute the \textsc{Klee's Measure} in time sensitive to the size of the \textsc{Maxima} of the input set.

\begin{algorithm}                      
\caption{\texttt{maxima\_adaptive\_measure}}          
\label{alg:maximaadpt}                           
\begin{algorithmic}[1]
\Require A $d$-dimensional domain box $\Gamma$, and a set of $n$ $d$-dimensional boxes $\mathcal{B}$
\Ensure The \textsc{Klee\rq{}s Measure} of $\mathcal{B}$ within $\Gamma$
\State Compute $M(\mathcal{B})$, the \textsc{Maxima} of $\mathcal{B}$
\State  \Return \texttt{SDC}($\Gamma, M(\mathcal{B})$)
\end{algorithmic}
\end{algorithm}
$\;$\end{LONG}
Overmars~\cite{Overmars81} showed that if the \textsc{Maxima} of a set of $n$ $d$-dimensional vectors can be computed in time $T_d(n)$, then the \textsc{Maxima} of a set of $n$ boxes can be computed in time within $\bigo(T_{2d}(n))$. 
\begin{LONG}
To prove this, Overmars~\cite{Overmars81} expressed each box $b_i = [l_{i,1}, u_{i,1}] \times \ldots \times [l_{i,d}, u_{i,d}]$ as a $2d$ dimensional vector $\vec{b_i}$~$=(-l_{i,1}, u_{i,1}, \ldots, -l_{i,1}, u_{i,d})$. Note that if $b_i, b_j$ are boxes, then $b_i$ dominates $b_j$ if and only if $\vec{b_i}$ dominates $\vec{b_j}$. 
\end{LONG} 
We use this result to show in Lemma~\ref{lemma:kmp_boxes} that the \textsc{Klee's Measure} can be computed in running time sensitive to the size of \textsc{Maxima} of the input.

\begin{lemma}\label{lemma:kmp_boxes}
Let $\mathcal{B}$ be a set of $n$ boxes in $\mathbb{R}^d$ and $\Gamma$ a d-dimensional box. The \textsc{Klee's Measure} of $\mathcal{B}$ within $\Gamma$ can be computed in time within $\bigo(n(\log h)^{2d-2} + h^{d/2})$, where $h$ is the size of the \textsc{Maxima} $M(\mathcal{B})$ of $\mathcal{B}$.
\end{lemma}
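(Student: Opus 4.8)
The plan is to combine the filtering observation with the two known algorithmic results cited in the excerpt. First I would invoke the observation stated just before the lemma: every box not in $M(\mathcal{B})$ is completely contained in some other box of $\mathcal{B}$, so removing it does not change the union of the boxes, and hence does not change the \textsc{Klee's Measure} within $\Gamma$. Thus it suffices to compute the measure of $M(\mathcal{B})$ within $\Gamma$, which is exactly what Algorithm~\ref{alg:maximaadpt} does. The correctness of the approach reduces to verifying this invariance, which is immediate from the definition of maximality.

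Next I would bound the two phases of the computation separately. For the first phase, computing $M(\mathcal{B})$, I would apply Overmars' reduction~\cite{Overmars81}: the \textsc{Maxima} of $n$ boxes in $\mathbb{R}^d$ can be computed in time within $\bigo(T_{2d}(n))$, where $T_{2d}(n)$ is the cost of computing the \textsc{Maxima} of $n$ vectors in $\mathbb{R}^{2d}$. Plugging the Kirkpatrick--Seidel output-sensitive bound $T_{D}(n) \in \bigo(n \log^{D-2} h)$ with $D = 2d$ yields a running time within $\bigo(n \log^{2d-2} h)$ for the first phase, where $h = |M(\mathcal{B})|$. One subtlety here is that the Kirkpatrick--Seidel bound is stated in terms of the size of the \textsc{Maxima} of the \emph{vector} instance, so I would note that the number of maximal vectors equals the number of maximal boxes (since $b_i$ dominates $b_j$ iff $\vec{b_i}$ dominates $\vec{b_j}$), so the same $h$ governs both; this is the one point that needs a careful word.

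For the second phase I would feed the filtered set $M(\mathcal{B})$, which has size $h$, into Chan's algorithm (invoked as \texttt{SDC}), computing the \textsc{Klee's Measure} of $h$ boxes within $\Gamma$ in time within $\bigo(h^{d/2})$. Summing the two phases gives the claimed bound $\bigo(n \log^{2d-2} h + h^{d/2})$. Finally I would argue the containment $\bigo(n \log^{2d-2} h + h^{d/2}) \subseteq \bigo(n^{d/2})$ by observing that $h \le n$, so $h^{d/2} \le n^{d/2}$, and that the polylogarithmic filtering term $n \log^{2d-2} h$ is dominated by $n^{d/2}$ for any fixed $d \ge 2$.

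I do not expect a genuine obstacle here, since the lemma is essentially a clean composition of two black-box results; the only care needed is to make explicit that the parameter $h$ is shared consistently between the Overmars/Kirkpatrick--Seidel filtering step and Chan's measure step, and to state clearly why the filtering step preserves the measure. The main expository decision is how much of Overmars' reduction to restate versus cite; I would restate only the vector encoding $\vec{b_i} = (-l_{i,1}, u_{i,1}, \ldots, -l_{i,d}, u_{i,d})$ and the equivalence of dominance, then defer to the cited results for the running-time bounds.
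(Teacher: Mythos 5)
Your proof is correct and follows essentially the same route as the paper's: compute $M(\mathcal{B})$ via Overmars' reduction combined with the Kirkpatrick--Seidel output-sensitive algorithm in $\bigo(n\log^{2d-2}h)$ time, then run Chan's \texttt{SDC} on the $h$ maximal boxes in $\bigo(h^{d/2})$ time. Your added remarks---that the parameter $h$ is shared between the vector and box instances, and that filtering preserves the measure---are careful touches the paper leaves implicit, but the argument is the same composition of the two cited results.
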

\begin{proof}
\begin{SHORT}
The bound can be achieved by composing two known algorithms: ($i$) As a consequence of Overmars' result~\cite{Overmars81}, the \textsc{Maxima} $M(\mathcal{B})$ of $\mathcal{B}$ can be computed in time within $\bigo(n \log^{2d-2} {h})$ using the output-size sensitive algorithm described by Kirkpatrick and Seidel~\cite{Kirkpatrick1985}; ($ii$) the \textsc{Klee's Measure} of $M(\mathcal{B})$, of size $h$, can be computed in time within $\bigo(h^{d/2})$ using the algorithm  proposed by Chan~\cite{Chan2013}.  The time bound follows.
\end{SHORT}
\begin{LONG}
Algorithm~\ref{alg:maximaadpt} achieves the bound given in the lemma: as a consequence of Overmars' result~\cite{Overmars81}, the \textsc{Maxima} $M(\mathcal{B})$ of $\mathcal{B}$ is computed in step one in time within $\bigo(n \log^{2d-2} {h})$ using the output-size sensitive algorithm described by Kirkpatrick and Seidel~\cite{Kirkpatrick1985}; in the second step, the \textsc{Klee's Measure} of $M(\mathcal{B})$, of size $h$, is computed in time within $\bigo(h^{d/2})$ using the algorithm  proposed by Chan~\cite{Chan2013}. The result follows.
\end{LONG}

\qed
\end{proof}
\begin{LONG}
Note that in the bound from the Lemma~\ref{lemma:kmp_boxes},  the base with exponent $d/2$ is $h$, instead of $n$ as in the bound $\bigo(n^{d/2})$ for the running time of $\texttt{SDC}$.  In degenerated instances, where $h$ is significantly smaller than $n$, the bound from Lemma~\ref{lemma:kmp_boxes} is significantly better than $\bigo(n^{d/2})$. 
\end{LONG}

One way to further improve this result is to remove dominated elements at each recursive call of Chan's algorithm~\cite{Chan2013}: we discuss the difficulties in analyzing this approach in Section~\ref{sec:discussion}. In the next section we describe another boosting technique, which still reduces to Chan $\bigo(n^{d/2})$'s algorithm, but is less focused on degenerated instances.

\section{Profile-based Partitioning}\label{sec:intrinsic}

The \emph{$i$-th profile} $k_i$ of a set of boxes $\mathcal{B}$ is defined as the maximum number of boxes intersected by any hyperplane orthogonal to the $i$-th dimension. The \emph{profile} $k$ of a set of boxes is defined as $k=\min_{i\in[1..d]}\{k_i\}$. D'Amore et al.~\cite{dAmoreNRW95} showed how to compute it in linear time (after sorting the boxes in each dimension).
We make the observation that Chan's algorithm~\cite{Chan2013} for this problem is adaptive to a measure slightly different to the profile (and weaker than it); and improve on this result by describing a technique which yields a solution sensitive to the profile of $\mathcal{B}$. 

\subsection{Intrinsic Adaptivity of Chan\rq{}s Algorithm}

The \textit{Simplify, Divide and Conquer} algorithm (\texttt{SDC} for short)  proposed by Chan~\cite{Chan2013} 
to compute the \textsc{Klee\rq{}s Measure}, already behaves adaptively in the sense that it runs faster on some large families of instances. Let the \emph{quasi-profile} $\kappa$ of a set of boxes be defined as $\kappa= \max \{ k_i \mid i \in [1..d] \}$, where $k_i$ denotes the $i$-th profile. 
\begin{LONG}
\paragraph{\textbf{Observation}}

Let $B$ be a set of boxes having quasi-profile $\kappa$ within a domain box $\Gamma$.  Algorithm \texttt{SDC} computes the \textsc{Klee's Measure} of $\mathcal{B}$ within $\Gamma$ in time within  $\bigo(n \log n + n\kappa^{(d-2)/2})$

The proof of this observation is quite technical and long. Since the result in next section subsumes this one, and the analysis there is considerably simpler, we omit the proof of this observation. 
\end{LONG}
\begin{SHORT}
(see the technical  proof in Section 3.1 of the extended version~\cite{BarbayPR15}).
\end{SHORT}

An example of the class of instances with small quasi-profile is illustrated in Figure~\ref{fig:degeneracy_orthants}. In the next subsection, we describe a slightly modified version of the algorithm  \texttt{SDC} which runs in time sensitive to the profile $k$ of $\mathcal{B}$ rather than its quasi-profile $\kappa$, an improvement since $k \le \kappa$ on all instances.

\subsection{Profile-based partitioning}\label{subsec:profile-based-partition}

Let $\mathcal{B}$ be a set of boxes with profile $k$, and $\Gamma$ a domain box. Given the profile $k$ of $\mathcal{B}$, Algorithm~\ref{alg:splitdomain} splits $\Gamma$ into $m \in \bigo(n/k)$ slabs $\Gamma_1 \ldots \Gamma_m$, such that the measure of $\mathcal{B}$ within $\Gamma$ is equal to the summation of the measures of $B$ within $\Gamma_1,\ldots,\Gamma_m$, respectively. The algorithm performs a plane sweep by one of the dimensions with the smallest profile and cuts the domain by a hyperplane every $2k$ endpoints. By computing the \textsc{Klee's Measure} of $\mathcal{B}$ within each $\Gamma_i$, and summing up all those values, one can compute the \textsc{Klee's Measure} of $\mathcal{B}$ within $\Gamma$. 

\begin{algorithm}
\caption{\texttt{split-domain}}
\label{alg:splitdomain}
\begin{algorithmic}[1]
\Require A domain $\Gamma$, a set of $n$ boxes $\mathcal{B}$, and the profile $k$ of $\mathcal{B}$
\Ensure A partition of $\Gamma$ into $m$ slabs, intersecting each one $\bigo(k)$ boxes.
\State let $i$ be a dimension where the $i$-th profile $k_i$ of $\mathcal{B}$ equals $k$
\For {$j=1, 2, \ldots, m \in \bigo(n/k)$}
\State let $p \gets (2k \times j)$-th endpoint of $\mathcal{B}$ within $\Gamma$
\State split $\Gamma$ into $\{\Gamma_L, \Gamma_R\}$ by the hyperplane $x_i = p$
\State let $\Gamma_j \gets \Gamma_L$, and $\Gamma \gets \Gamma_R$
\EndFor
\State \Return $\{\Gamma_1, \ldots, \Gamma_m\}$
\end{algorithmic}
\end{algorithm}

Each of the slabs into which $\Gamma$ is divided in Algorithm~\ref{alg:splitdomain} can intersect at most $\bigo(k)$ boxes of $\mathcal{B}$: by definition of the profile, at most $\bigo(k)$ boxes can intersect the boundaries of the slab, and since each slab contains at most $2k$ endpoints, no more than $\bigo(k)$ boxes can completely lie in its interior. This can be used to bound the running time of the computation of the \textsc{Klee's Measure} of $\mathcal{B}$.

\begin{lemma}\label{lemma:kbound}
Let $\mathcal{B}$ be a set of $n$ boxes in $\mathbb{R}^d$, $\Gamma$ be a d-dimensional domain box, and $k$ denote the profile of $\mathcal{B}$. The \textsc{Klee's Measure} of $\mathcal{B}$ within $\Gamma$ can be computed in time within $\bigo \left(n \log n + nk^{\frac{d-2}{2}} \right)$.
\end{lemma}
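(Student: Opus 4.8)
The plan is to use Algorithm~\ref{alg:splitdomain} as a black box to obtain the slabs $\Gamma_1,\dots,\Gamma_m$ with $m\in\bigo(n/k)$, and then to solve the \textsc{Klee's Measure} problem independently on each slab and add up the partial results. Concretely, for each slab $\Gamma_j$ I would collect the sub-family $\mathcal{B}_j\subseteq\mathcal{B}$ of boxes intersecting $\Gamma_j$, clip each of them against $\Gamma_j$ (the intersection of two boxes is again a box), and invoke Chan's algorithm \texttt{SDC}~\cite{Chan2013} with domain $\Gamma_j$ on the family $\mathcal{B}_j$. The value returned is the \textsc{Klee's Measure} of $\mathcal{B}$ within $\Gamma_j$, and summing these $m$ values yields the answer.

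Correctness rests on two observations. First, the slabs produced by Algorithm~\ref{alg:splitdomain} have pairwise disjoint interiors and their union is $\Gamma$, so by finite additivity of the volume (the shared boundary hyperplanes have measure zero) the \textsc{Klee's Measure} of $\mathcal{B}$ within $\Gamma$ equals $\sum_{j=1}^{m}\mathrm{vol}\big((\bigcup\mathcal{B})\cap\Gamma_j\big)$. Second, clipping the boxes against $\Gamma_j$ does not change the measure restricted to $\Gamma_j$, since points outside $\Gamma_j$ never contribute and boxes disjoint from $\Gamma_j$ vanish; hence each call computes exactly the $j$-th term of this sum.

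For the running time, I would first spend $\bigo(n\log n)$ sorting the endpoints along each dimension, which simultaneously lets me compute the profile $k$ and a dimension $i$ attaining it in linear additional time~\cite{dAmoreNRW95}, and then run Algorithm~\ref{alg:splitdomain} by a single plane sweep along dimension $i$. The crucial estimate is already established before the statement: each slab intersects $\bigo(k)$ boxes, so each clipped instance $\mathcal{B}_j$ has size $\bigo(k)$ and each call to \texttt{SDC} costs $\bigo(k^{d/2})$ (any $\bigo(k\log k)$ sorting internal to \texttt{SDC} telescopes over the slabs into $\bigo(n\log n)$). Over the $\bigo(n/k)$ slabs this sums to $\bigo((n/k)\,k^{d/2})=\bigo(n\,k^{(d-2)/2})$, and adding the $\bigo(n\log n)$ preprocessing gives the claimed bound.

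The step needing the most care, and the one I expect to be the main obstacle, is extracting all the sub-families $\mathcal{B}_j$ within this budget: a single box may straddle many slabs, so a naive "for each slab, scan all boxes" loop would cost $\bigo(nm)=\bigo(n^2/k)$. Instead I would materialise the families $\mathcal{B}_j$ during the very sweep that builds the slabs, maintaining the set of boxes currently active in dimension $i$ and reporting each active box to the slab it meets. The total number of (box, slab) incidences is $\bigo(n)$: each of the $m-1\in\bigo(n/k)$ internal cut hyperplanes is orthogonal to $i$ and hence crossed by at most $k$ boxes, bounding the boxes shared between consecutive slabs by $\bigo((n/k)\cdot k)=\bigo(n)$; adding the $n$ boxes counted once for the slab where their $i$-extent begins leaves $\bigo(n)$ reports in total. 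Thus forming every subproblem costs only $\bigo(n)$ beyond the sorting, which is absorbed into the other terms, and the overall bound $\bigo(n\log n+n\,k^{(d-2)/2})$ follows.
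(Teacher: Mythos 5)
Your proposal is correct and follows the same route as the paper: split the domain into $\bigo(n/k)$ slabs with Algorithm~\ref{alg:splitdomain}, solve each slab with \texttt{SDC} in $\bigo(k^{d/2})$ time, and sum. The paper's own proof is only a two-sentence sketch of exactly this; your additional care about additivity of the measure, clipping, and the $\bigo(n)$ bound on box--slab incidences fills in details the paper leaves implicit.
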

\begin{proof}
Using Algorithm~\ref{alg:splitdomain}, one can split the domain into $\bigo(n/k)$ slabs in linear time after sorting the input. The measure within each slab can be computed in time within $\bigo(k^{d/2})$ using the algorithm \texttt{SDC}. The result follows.
\qed
\end{proof}

\begin{LONG}
Note that again in the bound from Lemma~\ref{lemma:kbound}, the value with $d$ in the exponent is the profile $k^*$, instead of the size of the set $n$.  Over instances with small profile $k^*$ (like the ones in the class illustrated in Figure~\ref{fig:degeneracy}b), the bound from Lemma~\ref{lemma:kbound} is significantly better than the upper bound $\bigo(n^{d/2})$ for the running time of $\texttt{SDC}$.
\end{LONG}
In the next section, we describe a technique, based on the treewidth of the intersection graph of the input set, a measure that captures how ``close'' a graph is to a tree. This technique takes advantage of inputs where the intersection graph is of small treewidth. 

\section{Intersection Graph's Treewidth}\label{sec:intersectionGraphTreewidth}
In instances such as the one described in Figure~\ref{fig:degeneracy_orthants}, where the intersection graph is a tree, a minor variant of \texttt{SDC} performs in time within $\bigo(n \log n)$ independently of the dimension $d$. 
\begin{LONG}
The concept of treewidth was discovered independently several times under different names (for a nice introduction, see Sections 10.4 and 10.5 of Kleinberg and Tardos' book~\cite{KleinbergTardos2005}). Many graph problems that are \texttt{NP}-hard for general graphs can be solved in polynomial time for graphs with small treewidth. For example, Arnborg and Proskurowski~\cite{Arnborg198911} showed that for most NP-hard problems that have linear time algorithms for trees, there are algorithms solving them in time  linear in the size of the graph but exponential or super-exponential in the treewidth. They illustrated the idea with classical optimization problems involving independent sets, dominating sets, graph coloring, Hamiltonian circuits and network reliability. 
\end{LONG}

In this section we describe how to generalize this behavior to instances with a more general intersection graph, taking advantage of the treewidth~\cite{KleinbergTardos2005} of this intersection graph. We recall the definition and some basic results on treewidth in Section~\ref{subsec:treewidth-preliminaries}, to apply them to the computation of the \textsc{Klee's Measure}in Section~\ref{subsec:treewidth-adapt}.

\subsection{Preliminaries}\label{subsec:treewidth-preliminaries}

The most widely used treewidth definition, based on \emph{tree decompositions}, was introduced by Robertson and Seymour~\cite{Robertson1986309}. 

\begin{definition} A tree decomposition of a graph $G=(V,E)$ is a pair $(\{X_i | i \in I\}, T=(I,F))$, with $\{X_i | i \in I\}$ a family of subsets of $V$ and $T$ a tree, such that:
\begin{itemize}
\item (Node coverage) $\bigcup_{i \in I} X_i = V$, 
\item (Edge coverage) for all $\langle u,v \rangle \in E$, there is an $i \in I$ with $u,v \in X_i$, and
\item (Coherence) if $v \in X_i \cap X_j$, then for all $k$ in the simple path from $i$ to $j$ in $T$ we have $v \in X_k$.
\end{itemize}
\end{definition}

One refers to the elements of $I$ as \emph{nodes}, and to the elements of $V$ as \emph{vertices}. The \emph{width} of a tree decomposition $(\{X_i | i \in I\}, T=(I,F))$ is $\max_{i \in I} |X_i|-1$. A tree decomposition of $G$ is called \emph{optimal} if  its width is the minimum width among all tree decompositions of $G$. The \emph{treewidth} $\omega$  of a graph $G$ is the width of an optimal decomposition of itself.
\begin{LONG}
  (see Figure~ \ref{fig:treewidth} for an illustration of tree decompositions and treewidth). 

\begin{figure}
\centering
  \includegraphics[height=3.2cm]{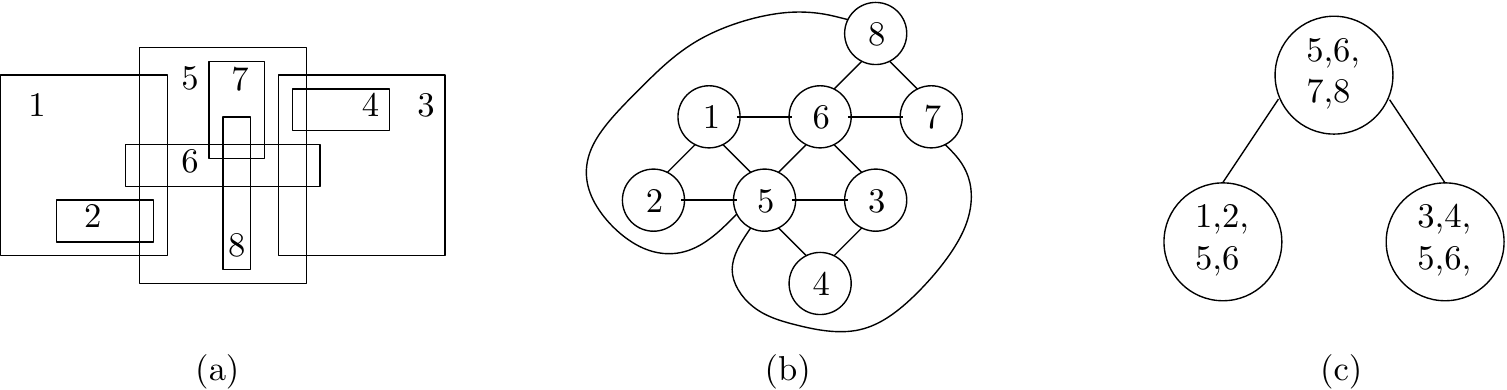}
  \caption{Tree decomposition: (a) a set of boxes; (b) the intersection graph of the set; and (c) an optimal tree decomposition of the graph, with treewidth $\omega = 2$ }
  \label{fig:treewidth}
\end{figure}
\end{LONG}

If $T_i$ is a sub-graph of $T$, we use $V_i$ to denote the vertices associated with the nodes in $T_i$, and $G_{V_i}$ to denote the sub-graph of $G$ induced by $V_i$.

\begin{property}\label{prop:td-vertex-separability}\emph{~\cite{KleinbergTardos2005}}
Let $k$ be a node of $T$ and suppose that $T - k$ has components $T_1, T_2, \ldots, T_d$. Then, the sub-graphs 
$G_{V_1 \setminus  X_k},G_{V_2 \setminus  X_k}, \ldots, G_{V_d \setminus  X_k}$
have no vertices in common, and there are no edges between them.
\end{property}

A tree decomposition $(\{X_i \mid i \in I\}, T=(I,F))$ is \emph{nonredundant} if there is no edge $\langle i,j \rangle$ in $T$ such that $X_i \subseteq X_j$. There is a simple procedure to make any tree decomposition be nonredundant without affecting the width: if  there is an edge $\langle i,j \rangle$ in $T$ such that $X_i \subseteq X_j$, one can contract the edge by `folding' the node $i$ into $j$; by repeating this process as often as necessary, one ends up having a nonredundant tree decomposition.

\begin{property}\label{prop:td-nonredundant}\emph{~\cite{KleinbergTardos2005}}
Any nonredundant tree decomposition of an $n$-vertex graph has at most $n$ nodes.
\end{property}

\begin{INUTILE}
\begin{property}\emph{~\cite{Gavril197447}}
Let $W \subseteq V$ be a clique in $G=(V,E)$, and $(\{X_i \mid i \in I\}, T=(I,F))$ be a tree decomposition of $G$. Then there is an $i \in I$ with $W \subseteq X_i$.
\end{property}
\end{INUTILE}

It is \texttt{NP}-hard to determine the treewidth of a given graph. Furthermore, there is no known algorithm to compute a constant-factor approximation of an optimal tree decomposition in polynomial time.  The best polynomial time approximation algorithms for tree decompositions are a $\bigo(\omega \log \sqrt{\omega})$-factor approximation algorithm running in time within $n^{\bigo(1)}$ described by Feige et al.~\cite{Feige2005}; and a $\bigo(\omega \log \omega)$-factor approximation algorithm running in time within $(n^4 \omega \log \omega)$ described by Amir~\cite{Amir10}. 
\begin{CONST-TW}
If the treewidth is known to be constant, then a $(3 \omega + 4)$-approximation can be computed in time within $\bigo(2^{\bigo(\omega)} + n \log n)$,  and a $(5 \omega + 4)$-approximation can be computed in time within $\bigo(2^{\bigo(\omega)} + n )$, using two algorithms described by Bodlaender et al.~\cite{Bodlaender2013} respectively.
\end{CONST-TW}

\subsection{An algorithm sensitive to the intersection graph's treewidth}\label{subsec:treewidth-adapt}

Here we describe an algorithm which benefits from instances that have an intersection graph with small treewidth. We will say that a set of boxes has treewidth $\omega$ if its intersection graph has treewidth $\omega$.

Intuitively, consider a set of $n$ boxes with a tree $T$ as intersection graph. Its \textsc{Klee's Measure} can be computed in a divide-and-conquer fashion: reduce the problem to two sub-problems by dividing the intersection graph, via a vertex removal, into two sub-trees $T_1$, $T_2$ of roughly equal  sizes; solve each problem independently; and then combine their solutions.
 \begin{LONG}
 Since the intersection graph is a tree, one can always find a vertex $v$ that divides the tree into two forests of size at most $\lfloor n/2 \rfloor$;  by adding $v$ back to both forests we obtain $T_1$, $T_2$ of size at most $\lceil (n+1)/2 \rceil$.  The \textsc{Klee's Measure} of the original instance is the sum of the \textsc{Klee's Measure} of 
each sub-instance minus the measure of their intersection, which is the volume of the box (vertex) used to split. 
This procedure yields an algorithm running time within $\bigo(n \log n)$. 
\end{LONG}

This procedure can be extended to the computation of the \textsc{Klee's Measure} of an instance of tree width $\omega$, given a tree decomposition $T$ of its intersection graph.  The following lemma shows how the solutions of two sub-problems can be combined into the general solution. If $t$ is a node of $T$, we denote by $\mathcal{B}_t$ the subset of the boxes of $\mathcal{B}$ corresponding to the vertices within $X_t$.

\begin{lemma}\label{theo:treedecomp-divide-km}
Let $T$ be a tree decomposition of the intersection graph of a set of boxes $\mathcal{B}$, and $t$ be a node of $T$ such that, when removed, $T$ is split into two non-empty sub-forests $F_1$ and $F_2$. Let $T_L = F_1 \cup \{t\}$,  $T_R = F_2 \cup \{t\}$, $\mathcal{B}_L = \bigcup_{l \in T_L}{\mathcal{B}_l}$, and $\mathcal{B}_R = \bigcup_{r \in T_R}{\mathcal{B}_r}$. Then, $T_L$ and $T_R$ are tree decompositions of $\mathcal{B}_L$ and $\mathcal{B}_R$ respectively; and the \textsc{Klee's Measure} of $\mathcal{B}$ equals the \textsc{Klee's Measure} of $\mathcal{B}_L$ plus the \textsc{Klee's Measure} of $\mathcal{B}_R$ minus the \textsc{Klee's Measure} of $\mathcal{B}_t$.
\end{lemma}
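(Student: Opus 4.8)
The plan is to split the statement into its two claims—that $T_L$ and $T_R$ are valid tree decompositions, and that the measures satisfy the inclusion–exclusion identity—and to handle each using the coherence property together with the separability property (Property~\ref{prop:td-vertex-separability}).

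First I would verify that $T_L$ is a tree decomposition of the intersection graph of $\mathcal{B}_L$ (the argument for $T_R$ being symmetric). Since $t$ is the node whose removal yields $F_1$ and $F_2$, adding $t$ back to $F_1$ reconnects every component of the forest $F_1$ through $t$, so $T_L$ is a connected subtree of $T$; consequently the unique path in $T_L$ between any two of its nodes coincides with their path in $T$. Node coverage holds by the very definition $\mathcal{B}_L = \bigcup_{l \in T_L}\mathcal{B}_l$, and coherence is inherited directly from $T$ because these paths are preserved. The only point requiring care is edge coverage: given two intersecting boxes $u,v \in \mathcal{B}_L$, some node $X_i$ of $T$ contains both; if $i \in T_L$ we are done, and otherwise $i$ lies in $F_2$, so the path from $i$ to any node of $T_L$ containing $u$ must pass through $t$, forcing $u \in X_t$ by coherence, and likewise $v \in X_t$, whence $t \in T_L$ covers the edge.

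Next I would establish the measure identity. Writing $\mathrm{KM}(\cdot)$ for the Klee's Measure and $U(\mathcal{S})$ for the union of the boxes in $\mathcal{S}$, the equality $\mathcal{B} = \mathcal{B}_L \cup \mathcal{B}_R$ together with inclusion–exclusion for volumes gives
\[
\mathrm{KM}(\mathcal{B}) = \mathrm{KM}(\mathcal{B}_L) + \mathrm{KM}(\mathcal{B}_R) - \mathrm{vol}\bigl(U(\mathcal{B}_L) \cap U(\mathcal{B}_R)\bigr),
\]
so it suffices to show $U(\mathcal{B}_L) \cap U(\mathcal{B}_R) = U(\mathcal{B}_t)$. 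Since $\mathcal{B}_t \subseteq \mathcal{B}_L$ and $\mathcal{B}_t \subseteq \mathcal{B}_R$, the inclusion $U(\mathcal{B}_t) \subseteq U(\mathcal{B}_L) \cap U(\mathcal{B}_R)$ is immediate.

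The reverse inclusion is the crux of the argument, and the step I expect to be the main obstacle, since it is where the combinatorial tree-decomposition structure must be converted into a geometric statement about the boxes. Take a point $p \in U(\mathcal{B}_L) \cap U(\mathcal{B}_R)$, so $p \in b_1$ for some $b_1 \in \mathcal{B}_L$ and $p \in b_2$ for some $b_2 \in \mathcal{B}_R$. Decomposing $\mathcal{B}_L = A \cup X_t$ and $\mathcal{B}_R = B \cup X_t$, where $A$ and $B$ denote the boxes lying strictly on the $F_1$ and $F_2$ sides (those outside $X_t$), if either $b_1$ or $b_2$ belongs to $X_t = \mathcal{B}_t$ then $p \in U(\mathcal{B}_t)$ and we are done. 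In the remaining case $b_1 \in A$ and $b_2 \in B$, the point $p \in b_1 \cap b_2$ would make $b_1$ and $b_2$ adjacent in the intersection graph; but Property~\ref{prop:td-vertex-separability}, applied with the removed node $t$, forbids any edge between $A$ and $B$, a contradiction. Hence this case cannot occur, yielding $U(\mathcal{B}_L) \cap U(\mathcal{B}_R) \subseteq U(\mathcal{B}_t)$ and thereby completing the identity.
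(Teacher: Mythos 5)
Your proof is correct and follows essentially the same route as the paper's: Property~\ref{prop:td-vertex-separability} shows that no box on the strict $F_1$ side can intersect a box on the strict $F_2$ side, so $U(\mathcal{B}_L)$ and $U(\mathcal{B}_R)$ overlap exactly in $U(\mathcal{B}_t)$, and inclusion--exclusion for the Lebesgue measure yields the identity. You are in fact more thorough than the paper, whose proof addresses only the measure identity and silently omits the verification that $T_L$ and $T_R$ are valid tree decompositions; your edge-coverage argument (routing the witness node's path through $t$ and invoking coherence to force both endpoints into $X_t$) supplies exactly that missing piece.
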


\begin{proof}
By Property~\ref{prop:td-vertex-separability} we know that $F_1$ and $F_2$ share no vertices, and there is no edge between them. Hence, no box corresponding to a vertex in a node from $F_1$ can intersect a box corresponding to a vertex in a node from $F_2$. Therefore, the intersection between $\mathcal{B}_L$ and $\mathcal{B}_R $ is  $\mathcal{B}_t$.  This, and the fact that the volume of a box (i.e., the \textsc{Klee's Measure} of a box) is a Lebesgue measure, proves that the  \textsc{Klee's Measure} of $\mathcal{B}_L \bigcup \mathcal{B}_R$ equals the \textsc{Klee's Measure} of $\mathcal{B}_L$ plus the \textsc{Klee's Measure} of $\mathcal{B}_R$ minus the \textsc{Klee's Measure} of $\mathcal{B}_t$. The result follows.
\qed
\end{proof}

Using this lemma, we can apply the procedure described above for trees to general tree decompositions, as in Algorithm~\ref{alg:twadptmeasure}. Lemma~\ref{lemma:treedecomp-divide-km-time} provides an upper bound for the running time of this new solution.

\begin{algorithm}                      
\caption{\texttt{tw\_measure}}
\label{alg:twadptmeasure}
\begin{algorithmic}[1]
\Require A domain box $\Gamma$, and set of $n$ boxes $\mathcal{B}$ in $\mathbb{R}^d$, and an $\rho$-node tree decomposition $T$ of the intersection graph of $\mathcal{B}$
\Ensure The \textsc{Klee's Measure} of $\mathcal{B}$ within $\Gamma$
\If {$\rho = 1$}
	\State let $t \gets$  the only node in $T$	
	\If {the measure of $t$ has not being computed before}
		\State $measures[t] \gets$  \texttt{SDC}($\Gamma, \mathcal{B}_t$)
	\EndIf
	\State  \Return $measures[t]$
\Else
	\State Find a node $t \in T$ that when removed splits $T$ into two sub-forests 
	\Statex $\;\;\;\;\;\{F_1, F_2\}$ of sizes at most $\lceil \rho/2 \rceil$
	\State let $T_1 \gets F_1 \cup \{t\}$, $T_2 \gets F_2 \cup \{t\}$
	\State Let $\mathcal{B}_L \gets \bigcup_{t \in T_L}{\mathcal{B}_t}$, $\mathcal{B}_R \gets \bigcup_{t \in T_R}{\mathcal{B}_t}$
	\State  \Return $\texttt{tw\_measure}(\Gamma, \mathcal{B}_L,  T_L) 
	+ \texttt{tw\_measure}(\Gamma,	\mathcal{B}_R, T_R) - measures[t]$
\EndIf
\end{algorithmic}
\end{algorithm}


\begin{lemma}\label{lemma:treedecomp-divide-km-time}
Let $\mathcal{B}$ be a set of $n$ $d$-dimensional boxes. Let  $T$  be a tree decomposition of the intersection graph of $\mathcal{B}$ with $\rho$ nodes of sizes $n_1, \ldots, n_\rho$, respectively. The \textsc{Klee's Measure} of $\mathcal{B}$ within a given $d$-dimensional domain box $\Gamma$ can be computed in time within $\bigo(\rho \log \rho +  \sum_{i=1}^{\rho}{n_i^{d/2}})$.
\end{lemma}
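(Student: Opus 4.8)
The plan is to prove correctness by induction on the number of nodes $\rho$, and then bound the running time of Algorithm~\ref{alg:twadptmeasure} by charging the cost to two sources: the base-case \texttt{SDC} computations, which will yield the $\sum_{i=1}^{\rho} n_i^{d/2}$ term, and the recursion overhead (separator finding and assembling sub-instances), which will yield the $\rho\log\rho$ term.

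For correctness I would induct on $\rho$. The base cases ($\rho \le 2$) are handled by computing the measure of each node directly with \texttt{SDC}. For $\rho \ge 3$, any tree on at least three nodes has a node of degree at least two, whose removal leaves at least two components; these can always be grouped into two non-empty sub-forests $F_1, F_2$. Lemma~\ref{theo:treedecomp-divide-km} then guarantees that $T_L = F_1 \cup \{t\}$ and $T_R = F_2 \cup \{t\}$ are valid tree decompositions of $\mathcal{B}_L$ and $\mathcal{B}_R$, and that the \textsc{Klee's Measure} of $\mathcal{B}$ equals that of $\mathcal{B}_L$ plus that of $\mathcal{B}_R$ minus that of $\mathcal{B}_t$. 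Since $F_1$ and $F_2$ are non-empty, both $T_L$ and $T_R$ have strictly fewer than $\rho$ nodes, so the recursive calls are correct by the induction hypothesis; moreover $t$ belongs to both $T_L$ and $T_R$, so $\texttt{measures}[t]$ is computed during the recursive calls and is available for the subtraction when they return.

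For the running time, the key observation is that the memoization array $\texttt{measures}$ makes the expensive call $\texttt{SDC}(\Gamma,\mathcal{B}_t)$ run exactly once per node $t$ of $T$, no matter how many times $t$ is duplicated as a separator along the recursion. Since \texttt{SDC} on the $n_i$ boxes of node $i$ runs in time within $\bigo(n_i^{d/2})$ (Chan's algorithm, as used throughout the paper), the base cases together cost $\bigo\!\left(\sum_{i=1}^{\rho} n_i^{d/2}\right)$. For the overhead, finding a balanced separator and assembling the two sub-instances takes time within $\bigo(s)$ for a subproblem of size $s$, so it suffices to bound the sum of subproblem sizes over the whole recursion. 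Because the separator $t$ is shared, the two children of a size-$s$ subproblem have total size $s+1$; choosing $t$ so that both forests have size at most $\lceil\rho/2\rceil$ keeps the depth within $\bigo(\log\rho)$. Letting $S_\ell$ be the total size of all subproblems at depth $\ell$, the single extra node per split together with the bound of $2^\ell$ subproblems at depth $\ell$ gives $S_{\ell+1} \le S_\ell + 2^\ell$, hence $S_\ell \le \rho + 2^\ell - 1$; summing over the $\bigo(\log\rho)$ levels yields $\sum_\ell S_\ell = \bigo(\rho\log\rho)$. Adding the two contributions gives the claimed $\bigo\!\left(\rho\log\rho + \sum_{i=1}^{\rho} n_i^{d/2}\right)$.

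The main obstacle I expect is controlling the duplication of separator nodes: each split adds one node to the total size, so a careless analysis would let the work grow geometrically down the recursion. The crux is therefore the combination of (i) the memoization argument, which charges each node's \texttt{SDC} cost only once and thus keeps the dominant term at $\sum_{i=1}^{\rho} n_i^{d/2}$ rather than a factor $\log\rho$ larger, and (ii) the per-level bound $S_\ell \le \rho + 2^\ell - 1$, which ensures the overhead telescopes to $\bigo(\rho\log\rho)$ over logarithmically many levels instead of exploding. A secondary point to verify carefully is that a balanced separator whose removal yields two non-empty forests each of size at most $\lceil\rho/2\rceil$ always exists and can be found in time linear in the subtree size, which follows from a standard centroid argument.
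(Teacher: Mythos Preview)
Your proposal is correct and follows essentially the same approach as the paper: analyze Algorithm~\ref{alg:twadptmeasure} by charging the leaf work to the memoized \texttt{SDC} calls (one per node, hence $\sum n_i^{d/2}$) and the internal work to $\bigo(\rho)$ per level over $\bigo(\log\rho)$ levels. Your argument is in fact more careful than the paper's, which simply asserts that each internal level costs $\bigo(\rho)$; your explicit accounting $S_{\ell+1}\le S_\ell + 2^\ell$ for the duplicated separator nodes and your induction for correctness fill in details the paper leaves implicit.
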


\begin{proof}
We show that Algorithm ~\ref{alg:twadptmeasure} runs in time within the given bound. The recursion tree corresponding to the algorithm has $\rho$ leaves, and since at each step the size of the problem is approximately reduced by one half, the height is within $\bigo(\log \rho)$. The total running time at each internal level of the tree is within $\bigo(\rho)$, and hence the $\rho \log \rho$ term. Moreover, the \textsc{Klee's Measure} of the boxes within each leaf is computed only once, making the total time of computing the measure of the leaves to be within $\bigo(\sum_{i=1}^{\rho}{n_i^{d/2}})$. The result follows. 
\qed
\end{proof}

When computing the \textsc{Klee's Measure} of a set $\mathcal{B}$ of boxes, for whose intersection graph is given an optimal tree decomposition, the following result follows.

\begin{corollary}\label{col:treedecomp-divide-km-time}
Let $\mathcal{B}$ be a set of $n$ $d$-dimensional boxes, and $T$  be an optimal non-redundant tree decomposition of the intersection graph of $\mathcal{B}$. The \textsc{Klee's Measure} of $\mathcal{B}$ within a given $d$-dimensional domain box $\Gamma$ can be computed in time within $\bigo(n \log n +  n \omega ^{d/2})$, where $\omega$ is the treewidth of the intersection graph of $\mathcal{B}$.
\end{corollary}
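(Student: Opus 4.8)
The plan is to derive the corollary directly from Lemma~\ref{lemma:treedecomp-divide-km-time} by bounding its two cost terms, $\rho \log \rho$ and $\sum_{i=1}^{\rho} n_i^{d/2}$, in terms of $n$ and the treewidth $\omega$. Both bounds exploit the two hypotheses on the given decomposition separately: non-redundancy will control the number of nodes $\rho$, while optimality will control the sizes $n_i$ of the individual bags. Since the decomposition is assumed to be given, there is nothing to pay for computing it, so the entire running time is the one supplied by the lemma.

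First I would bound the number of nodes. Because the decomposition is non-redundant and the intersection graph of $\mathcal{B}$ has exactly $n$ vertices (one per box), Property~\ref{prop:td-nonredundant} yields $\rho \le n$. Consequently $\rho \log \rho \in \bigo(n \log n)$. This step is precisely where non-redundancy is essential: a redundant decomposition may have arbitrarily many nodes, so without this hypothesis the term $\rho \log \rho$ could not be charged to $n \log n$.

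Next I would bound the sum. Because the decomposition is optimal, its width equals the treewidth, so by definition $\max_{i \in I} |X_i| - 1 = \omega$; hence every bag satisfies $n_i = |X_i| \le \omega + 1$. Treating $d$ as a constant and assuming $\omega \ge 1$, this gives $n_i^{d/2} \le (\omega+1)^{d/2} \in \bigo(\omega^{d/2})$, and summing over the $\rho \le n$ nodes yields $\sum_{i=1}^{\rho} n_i^{d/2} \le n(\omega+1)^{d/2} \in \bigo(n \omega^{d/2})$. In the degenerate case $\omega = 0$ every bag is a singleton and the sum is at most $n$, which is absorbed by the $n \log n$ term, so the stated bound still holds. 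Adding the two estimates gives the claimed running time $\bigo(n \log n + n \omega^{d/2})$.

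I do not expect a genuine obstacle here, since the argument is essentially a substitution into an already-established lemma; the only points requiring care are to keep the roles of the two hypotheses distinct --- non-redundancy for the node count $\rho$, optimality for the bag sizes $n_i$ --- and to dispose of the small-$\omega$ edge case so that the factor $\bigo(\omega^{d/2})$ is legitimate rather than vacuous.
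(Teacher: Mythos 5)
Your proof is correct and follows essentially the same route as the paper's: invoke Property~\ref{prop:td-nonredundant} to get $\rho \le n$ from non-redundancy, use optimality to bound each bag size by $\omega+1$, and substitute into Lemma~\ref{lemma:treedecomp-divide-km-time}. The extra attention you give to the $\omega=0$ edge case is a small refinement the paper omits, but it does not change the argument.
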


\begin{proof}
Let $\rho$ denote the number of nodes in $T$. By Property~\ref{prop:td-nonredundant}, we know that, since $T$ is non-redundant, $\rho \le n$. Besides, each node in the tree decomposition has at most $\omega + 1$ vertices. The result follows by replacing each $n_i$  by $\omega$, for $i=[1..\rho]$, in Lemma~\ref{lemma:treedecomp-divide-km-time}.
\qed
\end{proof}

Note that the bound $\bigo(n \log n +  n \omega ^{d/2})$ in Lemma~\ref{lemma:treedecomp-divide-km-time} is better than or equal to $\bigo(n^{d/2})$ as long as $\omega \le n^{1-2/d}$. For this bound to be achieved, an optimal tree decomposition of the intersection graph is required. This decomposition, in general, needs to be computed, and it is not known whether this can be performed in polynomial time~\cite{KleinbergTardos2005}. 

Optimal tree decomposition of certain classes of graphs can be found efficiently. The \textsc{Klee's Measure} of sets with intersection graph in such classes can be computed in time depending on the treewidth. We describe two examples of such results in Corollaries~\ref{col:kmp_disconnected_graph}~and~\ref{col:kmp_profile_treewidth}.

\begin{corollary}\label{col:kmp_disconnected_graph}
Let $\mathcal{B}=\{b_1, b_2, \ldots, b_n\}$ be a set of $n$ boxes in $\mathbb{R}^d$, $\Gamma$ be a d-dimensional box, and $G$ be the intersection graph of $\mathcal{B}$. The measure of the union of $\mathcal{B}$ within $\Gamma$ can be computed in time within $\bigo(n \log ^{d-1}{n} + \sum_{i=1}^{\rho}{n_i^{d/2}})$, where $\rho$ is the number of connected components of $G$, and $n_1, \ldots, n_\rho$ are the sizes of the $\rho$ connected components.
\end{corollary}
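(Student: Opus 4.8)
The plan is to turn the connected components of $G$ into a (trivial) tree decomposition and then invoke Lemma~\ref{lemma:treedecomp-divide-km-time}. Let $C_1,\ldots,C_\rho$ be the connected components of $G$, and let $X_i$ be the set of boxes (vertices) in $C_i$. I would first check that, taking these sets as bags and arranging them as the nodes of a balanced tree $T$, the pair $(\{X_i \mid i\in[1..\rho]\}, T)$ is a valid tree decomposition of $G$: node coverage holds because the components partition $V$; edge coverage holds because every edge of $G$ lies inside a single component and is therefore covered by that component's bag; and coherence is immediate because the bags are pairwise disjoint, so no vertex occurs in two bags. This decomposition has exactly $\rho$ nodes, of sizes $n_1,\ldots,n_\rho$.

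Given this decomposition, Lemma~\ref{lemma:treedecomp-divide-km-time} computes the \textsc{Klee's Measure} of $\mathcal{B}$ within $\Gamma$ in time within $\bigo(\rho\log\rho + \sum_{i=1}^{\rho} n_i^{d/2})$. The correctness is exactly Lemma~\ref{theo:treedecomp-divide-km}: because there are no edges between distinct components, whenever the recursion of Algorithm~\ref{alg:twadptmeasure} splits $T$ at a node $t$, the two sides meet only in $\mathcal{B}_t$, so the measure is additive up to the inclusion--exclusion correction already accounted for. Since $\rho\le n$ and, for $d\ge 2$, $\log\rho\le\log n\le\log^{d-1} n$, the overhead term $\rho\log\rho$ is absorbed into $n\log^{d-1} n$.

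It remains to produce the decomposition itself, that is, to compute the connected components of $G$ within the claimed $\bigo(n\log^{d-1} n)$ budget; this is the step I expect to be the main obstacle. The naive route---materialising $G$ and running a graph traversal---fails because $G$ may have $\Theta(n^2)$ edges. Instead I would compute a spanning forest of $G$ directly, by sweeping a hyperplane along one dimension while maintaining the $(d-1)$-dimensional projections of the currently active boxes in an orthogonal intersection-searching structure supporting insertions and intersection queries, each in time within $\bigo(\log^{d-1} n)$, together with a union-find over the boxes. When a box becomes active I query the structure for active boxes meeting it in the remaining $d-1$ coordinates and union it with them. The subtlety, and the crux of the time analysis, is to guarantee that only $\bigo(n)$ \emph{effective} merges take place (each reduces the number of components by one) so that the total work stays within $\bigo(n\log^{d-1} n)$ rather than degenerating into the cost of reporting all intersecting pairs; this requires merging each newly inserted box with a single representative per component it touches and discarding representatives that have already been merged, so that the number of queries returning a \emph{new} neighbour is linear.

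Finally I would add the three contributions---$\bigo(n\log^{d-1} n)$ to build the decomposition, $\bigo(\rho\log\rho)$ of recursion overhead, and $\bigo(\sum_{i=1}^{\rho} n_i^{d/2})$ for the per-component calls to \texttt{SDC}---and simplify, using $\rho\le n$ and $\log\rho\le\log^{d-1} n$ for $d\ge 2$, to $\bigo(n\log^{d-1} n + \sum_{i=1}^{\rho} n_i^{d/2})$, which is the claimed bound.
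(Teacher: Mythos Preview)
Your proposal is correct and follows essentially the same route as the paper: build a tree decomposition whose bags are the connected components of $G$, then invoke Lemma~\ref{lemma:treedecomp-divide-km-time}. The only difference is that where you sketch a sweep-plus-union-find argument for computing the components in $\bigo(n\log^{d-1} n)$ time and flag it as the main obstacle, the paper simply cites the algorithm of Edelsbrunner et al.~\cite{Edelsbrunner1984}, which delivers exactly that bound; you can replace your third paragraph by that citation.
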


\begin{proof}
An algorithm described by Edelsbrunner et al. ~\cite{Edelsbrunner1984} computes the connected components in time within $\bigo(n \log ^{d-1}{n})$. From the connected components one can easily obtain a tree decomposition of the graph as follows: create a node for each connected component, and add edges between the nodes until obtaining any arbitrary tree. By Lemma~\ref{lemma:treedecomp-divide-km-time} the bound follows.
\qed
\end{proof}

When the profile of the input instance is $k$, the same bound from Lemma~\ref{lemma:kbound} can be achieved by using Algorithm~\ref{alg:twadptmeasure}, as seen in Corollary~\ref{col:kmp_profile_treewidth}.

\begin{corollary}\label{col:kmp_profile_treewidth}
Let $B$ be a set of $n$ boxes in $\mathbb{R}^d$,   $\Gamma$ be a d-dimensional box, and $k$ be the profile of $\mathcal{B}$ within $\Gamma$. The \textsc{Klee's Measure} of $\mathcal{B}$ within $\Gamma$ can be computed in time within $\bigo(n \log n + (n/k)k^\frac{d-2}{2})$.
\end{corollary}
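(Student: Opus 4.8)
The plan is to recover the profile-based running time of Lemma~\ref{lemma:kbound} by routing it through the treewidth machinery: I would build, from the profile structure of $\mathcal{B}$, a \emph{path} decomposition of its intersection graph with $\bigo(n/k)$ nodes each of size $\bigo(k)$, and then simply invoke Lemma~\ref{lemma:treedecomp-divide-km-time} on it via Algorithm~\ref{alg:twadptmeasure}. The virtue of this route is that it shows the treewidth technique subsumes the profile technique, needing no separate combinatorial argument beyond what the slab construction already provides.

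First I would run Algorithm~\ref{alg:splitdomain} on $(\Gamma, \mathcal{B}, k)$. Letting $i$ be a dimension attaining the profile, this produces slabs $\Gamma_1, \ldots, \Gamma_m$ ordered along dimension $i$, with $m \in \bigo(n/k)$. By the argument in the paragraph preceding Lemma~\ref{lemma:kbound}, each $\Gamma_j$ is met by at most $\bigo(k)$ boxes. I would then form a path $T$ on nodes $1, \ldots, m$ (edges $\langle j, j+1\rangle$) and set each bag $X_j$ to be the boxes intersecting $\Gamma_j$; thus $T$ has $\rho = m \in \bigo(n/k)$ nodes and every bag has size $\bigo(k)$.

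The crux is verifying that $(\{X_j\}, T)$ is a genuine tree decomposition of the intersection graph. Node coverage is immediate since the slabs tile $\Gamma$ in dimension $i$, so every box lies in some bag. Edge coverage holds because two intersecting boxes overlap in every coordinate, in particular in dimension $i$, so some slab meets both and places them in a common bag. Coherence is the delicate part, and it is exactly where the axis-parallel geometry is indispensable: each box projects onto dimension $i$ as a single interval, and since the slabs are consecutive along that axis, the slabs a given box meets form a contiguous range; hence if a box lies in $X_a$ and $X_b$ it lies in every $X_c$ with $a \le c \le b$, which is precisely coherence along the path $T$.

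With $T$ in hand, Lemma~\ref{lemma:treedecomp-divide-km-time} applied through Algorithm~\ref{alg:twadptmeasure} gives running time within $\bigo\left(\rho \log \rho + \sum_{j=1}^{\rho} n_j^{d/2}\right) \subseteq \bigo((n/k)\log(n/k) + (n/k)\,k^{d/2})$, whose measure term $(n/k)\,k^{d/2}$ equals $n k^{(d-2)/2}$ and thus matches Lemma~\ref{lemma:kbound}. Adding the $\bigo(n \log n)$ cost of sorting the endpoints, computing the profile, and forming the slabs yields the stated bound. I expect the only real obstacle to be the coherence check above; everything else is bookkeeping, and the per-slab size bound $\bigo(k)$ is already established in the text.
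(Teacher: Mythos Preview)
Your proof is correct, but it takes a genuinely different route from the paper's. The paper does not build a decomposition of the intersection graph of $\mathcal{B}$ at all; instead it first \emph{splits the boxes} at slab boundaries to obtain a new set $\mathcal{B}'$ with the same \textsc{Klee's Measure}, so that every (split) box lives in a single slab. The intersection graph of $\mathcal{B}'$ then has no inter-slab edges, each vertex lies in exactly one bag, and coherence is vacuous---any tree on the $\bigo(n/k)$ slab-nodes is a valid decomposition, after which Lemma~\ref{lemma:treedecomp-divide-km-time} is invoked exactly as you do.

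Your construction is arguably the sharper of the two: by keeping $\mathcal{B}$ intact and exploiting that each box projects to an interval along the profile axis, you exhibit a genuine \emph{path} decomposition of the \emph{original} intersection graph with bags of size $\bigo(k)$, thereby proving directly that the treewidth (indeed pathwidth) of $\mathcal{B}$ is $\bigo(k)$. The paper's splitting trick sidesteps the coherence argument entirely, which makes its verification shorter, but at the cost of passing through an auxiliary instance $\mathcal{B}'$ and only bounding the treewidth of that transformed set. Both routes land on the same running time $\bigo(n\log n + (n/k)\,k^{d/2}) = \bigo(n\log n + n k^{(d-2)/2})$, matching Lemma~\ref{lemma:kbound}.
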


\begin{proof}
We can transform $\mathcal{B}$ into a set $\mathcal{B}'$ with the same \textsc{Klee's Measure}, but with treewidth within $\bigo(k)$, as follows:  
Split the domain into $\bigo(n/k)$ slabs using Algorithm~\ref{alg:splitdomain}. 
Then, for each slab, add to $\mathcal{B}'$ the boxes in $\mathcal{B}$ that intersect the slab, restricted to it (i.e., if a box intersects several slabs, the box will be split into multiple boxes that intersect only one slab, and such that the union of them is the original one)

A tree decomposition of the intersection graph of $\mathcal{B'}$ with $\bigo(n/k)$ nodes of size within $\bigo(k)$ can be obtained as follows: create a node for each slab, and add edges between the nodes until an arbitrary tree is obtained. Since no box can intersect a box out of its slab, the tree decomposition is valid. The bound follows from applying Lemma~\ref{lemma:treedecomp-divide-km-time}.
\qed
\end{proof}

An approximation of the optimal tree decomposition of the input set could be used, obtaining the weaker bounds described in Corollary~\ref{col:aprox-treedecomp-divide-km-time}.


\begin{corollary}\label{col:aprox-treedecomp-divide-km-time}
Let $\mathcal{B}$ be a set of $n$ $d$-dimensional boxes, and $\Gamma$  a $d$-dimensional domain box.
The \textsc{Klee's Measure} of $\mathcal{B}$ within $\Gamma$ can be computed
\begin{INUTILE}
 in time within $\bigo(n \log n)$, when the treewidth $w$ of $\mathcal{B}$ is constant, and 
 \end{INUTILE}
 in time within $\bigo(n^4\omega \log \omega + N (\omega \log \omega)^{d/2})$.
\end{corollary}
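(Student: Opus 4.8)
The plan is to reduce to the situation already handled by Lemma~\ref{lemma:treedecomp-divide-km-time}, namely that of a given tree decomposition, by first computing an approximately optimal one. Since only the boxes $\mathcal{B}$ are given (and not a decomposition), I would first build the intersection graph $G$ of $\mathcal{B}$ by testing each of the $\bigo(n^2)$ pairs of boxes for intersection, each test costing $\bigo(d)$ time; this preprocessing takes $\bigo(n^2)$ time for fixed $d$ and will be absorbed by the dominant term. Then, because determining an optimal tree decomposition is \texttt{NP}-hard, I would run Amir's approximation algorithm~\cite{Amir10} on $G$, which in time within $\bigo(n^4 \omega \log \omega)$ produces a tree decomposition $T$ whose width is within $\bigo(\omega \log \omega)$, where $\omega$ is the true treewidth of $G$.

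Next I would make $T$ nonredundant using the edge-contraction procedure recalled in Section~\ref{subsec:treewidth-preliminaries}: while some edge $\langle i,j\rangle$ of $T$ satisfies $X_i \subseteq X_j$, fold node $i$ into node $j$. This never increases the width, so the width of $T$ stays within $\bigo(\omega \log \omega)$, and by Property~\ref{prop:td-nonredundant} the resulting decomposition has at most $\rho \le n$ nodes. The contraction runs in time polynomial in $n$, which is dominated by the cost of Amir's algorithm.

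Finally I would feed $T$ to Algorithm~\ref{alg:twadptmeasure} and invoke Lemma~\ref{lemma:treedecomp-divide-km-time}. With $\rho \le n$ nodes, each of size $n_i \in \bigo(\omega \log \omega)$, the measure computation runs in time within $\bigo(\rho \log \rho + \sum_{i=1}^{\rho} n_i^{d/2}) \subseteq \bigo(n \log n + n (\omega \log \omega)^{d/2})$. Adding the $\bigo(n^4 \omega \log \omega)$ spent on preprocessing, and noting that the $n \log n$ and $n^2$ terms are absorbed by it since $\omega \ge 1$, the total running time lies within $\bigo(n^4 \omega \log \omega + n (\omega \log \omega)^{d/2})$, as claimed.

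The main obstacle here is conceptual rather than computational: because we can only approximate the treewidth, the decomposition we obtain has width $\bigo(\omega \log \omega)$ rather than $\omega$, and this slack is amplified through the exponent $d/2$, producing the factor $(\omega \log \omega)^{d/2}$ in place of the $\omega^{d/2}$ of Corollary~\ref{col:treedecomp-divide-km-time}. The one technical point that genuinely requires care is bounding the number of nodes of the decomposition: an off-the-shelf approximate decomposition need not have $\bigo(n)$ nodes, so the nonredundancy reduction, together with its guarantee via Property~\ref{prop:td-nonredundant}, is essential to keep both the $\rho \log \rho$ term and the number of leaf measure computations under control.
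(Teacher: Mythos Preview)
Your argument is correct and follows the same route as the paper: compute an $\bigo(\omega\log\omega)$-approximate tree decomposition via Amir's algorithm and then invoke Lemma~\ref{lemma:treedecomp-divide-km-time} (equivalently, substitute $\omega\log\omega$ for $\omega$ in Corollary~\ref{col:treedecomp-divide-km-time}). You supply additional detail the paper omits---explicitly building the intersection graph and reducing the decomposition to a nonredundant one so that Property~\ref{prop:td-nonredundant} bounds $\rho\le n$---which is welcome and does not deviate from the intended proof.
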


\begin{proof}
The result follows by replacing the $\omega$ term in the bound of Lemma~\ref{lemma:treedecomp-divide-km-time} by the
$\bigo(\omega \log \omega)$-factor approximation obtained by Amir's algorithm~\cite{Amir10} (see the end of Section~\ref{subsec:treewidth-preliminaries} for details).
\qed
\end{proof}

Note that the $\bigo(n^4\omega \log \omega + n (\omega \log \omega)^{d/2})$ bound for the general case is lower than $\bigo(n^{d/2})$ as long as $d > 8$ (because of the first term) and $\omega \log \omega \le n^{(d-2)/d}$. 

In the following section, we compare the techniques we have described so far and describe how to combine them.

\section{Combining the Techniques}\label{sec:combining}
A low profile implies that the intersection graph has low treewidth (Corollary~\ref{col:kmp_profile_treewidth}), but a low treewidth does not imply a low profile: an instance of $n$ boxes in the class illustrated in Figure~\ref{fig:degeneracy_tree} has a profile within $\bigo(n)$, and its treewidth is one. 
On the other hand, the running time of the algorithm taking advantage of the profile is never worth than $\bigo(n^{d/2})$, which is not true for the running time of the one sensitive to the treewith, even if an optimal tree decomposition is provided to it.

The treewidth and profile measures are independent from the size of the \textsc{Maxima}. For example, an instance of $n$ boxes in the class illustrated in Figure~\ref{fig:degeneracy_ig} has a \textsc{Maxima} of size $1$, and its treewidth is $n-1$. With respect to the treewidth, this is a `hard' instance, but with respect to the \textsc{Maxima} size is easy. On the contrary, an instance of $n$ boxes in the class illustrated in Figure~\ref{fig:degeneracy_tree} has a \textsc{Maxima} size of $n$, whiles its treewidth is one. 
\begin{INUTILE}
The instance of Figure~\ref{fig:degeneracy_orthants} $\ldots$
\end{INUTILE}
\begin{figure}
\centering
\begin{subfigure}{.45\textwidth}
  \centering
  \includegraphics[height=3.8cm]{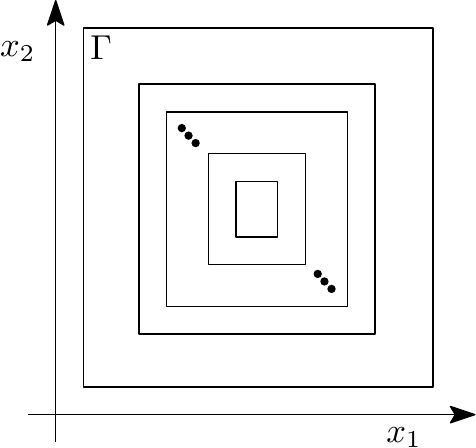}
  \caption{ }
  \label{fig:degeneracy_ig}
\end{subfigure}
\begin{subfigure}{.45\textwidth}
  \centering
  \includegraphics[height=3.8cm]{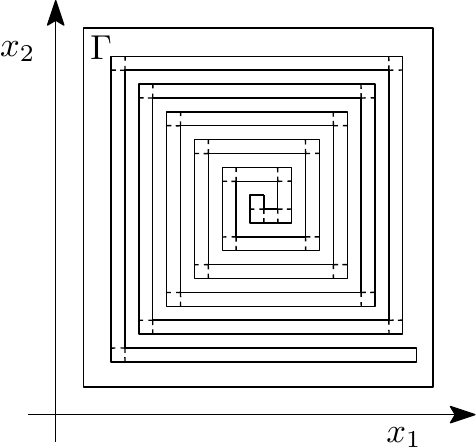}
  \caption{ }
  \label{fig:degeneracy_tree}
\end{subfigure}
\caption{Two classes of instances of the \textsc{Klee's Measure} problem that are ``easy'' or ``hard'' depending on the measure considered: instance (a) is easy if the size of the \textsc{Maxima} is considered, but difficult  if the profile or treewidth are considered; instance (b) is easy for treewidth, but hard for both \textsc{Maxima} size and profile.} 
\label{fig:degeneracy_comp} 
\end{figure}

Since these two measures are independent, we can combine them to obtain an algorithm sensitive to both of them, at the same time, by computing the \textsc{Maxima} of the set; and finding the \textsc{Klee's Measure} of the \textsc{Maxima} of the remaining graph as described in Corollary~\ref{col:aprox-treedecomp-divide-km-time}. This way of proceeding yields an algorithm with running time improving over the results from Lemmas~\ref{lemma:kmp_boxes} and~\ref{lemma:treedecomp-divide-km-time}.
\begin{theorem}\label{theo:combined}
Let $\mathcal{B}$ be a set of $n$ boxes in $\mathbb{R}^d$, $\Gamma$ a $d$-dimensional box, and $G$ be the intersection graph of $\mathcal{B}$. The \textsc{Klee's Measure} of $\mathcal{B}$ within $\Gamma$ can be computed in time within 
$\bigo \left(n \log^{2d-2} h + h^4 \omega \log \omega + h(\omega \log \omega)^{d/2} \right)$, 
where $h$ is the size of the \textsc{Maxima} $M(\mathcal{B})$ of $\mathcal{B}$, and $w$ its treewidth.
\end{theorem}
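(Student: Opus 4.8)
The plan is to chain the two independent reductions already established: first shrink the instance to its \textsc{Maxima} as in Lemma~\ref{lemma:kmp_boxes}, and then solve the resulting smaller instance with the treewidth-sensitive algorithm of Corollary~\ref{col:aprox-treedecomp-divide-km-time}. First I would compute the \textsc{Maxima} $M(\mathcal{B})$, of size $h$, in time within $\bigo{(n \log^{2d-2} h)}$, exactly as in the proof of Lemma~\ref{lemma:kmp_boxes}: Overmars' reduction expressing each box as a $2d$-dimensional vector, followed by the Kirkpatrick--Seidel output-sensitive algorithm. Since removing a non-maximal box leaves the value of the measure unchanged, it then suffices to compute the \textsc{Klee's Measure} of the $h$ boxes of $M(\mathcal{B})$ within $\Gamma$.

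The key observation, and the step I expect to require the most care, is that replacing $\mathcal{B}$ by $M(\mathcal{B})$ cannot increase the treewidth. The intersection graph of $M(\mathcal{B})$ is precisely the subgraph of $G$ \emph{induced} by the maximal boxes, and treewidth is monotone under taking induced subgraphs: restricting every bag $X_i$ of an optimal tree decomposition of $G$ to the surviving vertices yields a valid tree decomposition of the induced subgraph of width at most $\omega$. Hence the intersection graph of $M(\mathcal{B})$ has some treewidth $\omega' \le \omega$.

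Finally I would feed the $h$-box instance $M(\mathcal{B})$ to the algorithm of Corollary~\ref{col:aprox-treedecomp-divide-km-time}, that is, run Amir's $\bigo{(\omega' \log \omega')}$-factor approximation to obtain a tree decomposition of width within $\bigo{(\omega' \log \omega')}$ and then apply Algorithm~\ref{alg:twadptmeasure} through Lemma~\ref{lemma:treedecomp-divide-km-time}. Applied verbatim to an $h$-vertex graph of treewidth $\omega'$, this costs time within $\bigo{(h^4 \omega' \log \omega' + h (\omega' \log \omega')^{d/2})}$. Both summands are monotone increasing in $\omega'$, so the bound $\omega' \le \omega$ from the previous paragraph upgrades them to $\bigo{(h^4 \omega \log \omega + h (\omega \log \omega)^{d/2})}$. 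Adding the cost of the \textsc{Maxima} computation yields the claimed total of $\bigo{(n \log^{2d-2} h + h^4 \omega \log \omega + h (\omega \log \omega)^{d/2})}$.

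The only genuinely new ingredient beyond a mechanical composition of Lemma~\ref{lemma:kmp_boxes} and Corollary~\ref{col:aprox-treedecomp-divide-km-time} is the monotonicity argument of the second paragraph, together with the remark that Amir's running time and approximation guarantee on the filtered instance are governed by $\omega'$ and therefore dominated by the corresponding expressions in $\omega$. Everything else follows directly from results proved earlier in the paper, so I would keep the write-up short and concentrate the justification on the induced-subgraph/treewidth step.
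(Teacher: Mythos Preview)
Your proposal is correct and follows exactly the route the paper sketches in the paragraph preceding Theorem~\ref{theo:combined}: compute the \textsc{Maxima} as in Lemma~\ref{lemma:kmp_boxes}, then apply Corollary~\ref{col:aprox-treedecomp-divide-km-time} to the $h$ surviving boxes. The paper gives no formal proof beyond that sentence, so your induced-subgraph monotonicity argument (that the intersection graph of $M(\mathcal{B})$ is an induced subgraph of $G$ and hence has treewidth $\omega'\le\omega$) actually supplies a detail the paper leaves implicit; it is correct and is indeed the only step that is not a mechanical composition of the two earlier results.
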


No lower bound is known for this problem with respect to these measures. In fact, we believe the results obtained here can be further improved, by considering finer versions of the measures in order to improve the analysis. We describe 
preliminary results in this direction in the next section.

   \section{Discussion}\label{sec:discussion}

Each of the three boosting techniques that we analyzed can be improved, and we describe preliminary results for each technique in those directions below, as well as other lines of research.

The \textsc{Maxima} based technique (described in Section~\ref{sec:maximaFiltering}) yields an algorithm running in time within $\bigo(n(\log h)^{2d-2} + h^{d/2})$, where $h$ is the size of the \textsc{Maxima} of the input set. This bound can be improved. For example, if instead of filtering items not in the \textsc{Maxima} only once, this is done as part of the \emph{simplification} step of the algorithm \emph{Simplify, Divide and Conquer} (\texttt{SDC})~\cite{Chan2013}, the expression for its running time becomes $T(n)=2T(\frac{h}{2^{2/d}}) + \bigo(n \log^{2d-2}{h})$. This running time is still within $\bigo(n^{d/2})$ in the worst case, and also within $\bigo(n \log^{2d-2}{h}+h^{d/2})$. It is never worse (asymptotically) and it is better in many cases, but how to formally analyze this improvement is still an open question.

The profile based technique (described in Section~\ref{sec:intrinsic}) yields a solution running in time within $\bigo \left((n-k) \log \frac{n-k}{k} + n k^{\frac{d-2}{2}}  \right)$. The algorithm \texttt{SDC}~\cite{Chan2013} is already adaptive to the profile of the input set. It has a limitation though: it necessarily cycles over the dimensions in order to ensure running in time within $\bigo(n^{d/2})$. If there are few dimensions where the profile of the set is small, this technique performs considerably better than \texttt{SDC}. The technique could be further improved if, instead of considering an upper bound for the profile in each sub-problem, we use the exact value of the profile of the subproblem. However, it is not clear how to analyze this improvement.

The treewidth based technique (described in Section~\ref{sec:intersectionGraphTreewidth}) yields an algorithm running in time within $\bigo(n \log n +  n \omega ^{d/2})$, if an optimal tree decomposition of the intersection graph is given; and in time within $\bigo(n^4\omega \log \omega + n (\omega \log \omega)^{d/2})$ if not. Note that these running time bounds are not always better than or equal to the $\bigo(n^{d/2})$ achivied by algorithm \texttt{SDC}. The dependence on a tree decomposition is the main weakness of this approach. It is not known whether the \textsc{Klee's Measure} can be computed by using a treewidth-sensitive algorithm that does not depend explicitly on a tree decomposition of the intersection graph. 

Finally, note that the techniques that we described focus on the structure of the instance, as opposed to the order in which the instance is given. As such, the algorithms that we described cannot beat the $\bigo(n \log n)$ bound, even though there are instances that can still be solved in time within $\bigo(n)$. If one considers the order in which the input is given, for special pre-sorted inputs one can achieve the $\bigo(n)$ bound.

\begin{LONG}
\section*{Acknowledgement}
All authors were partially supported by Millennium Nucleus Information and Coordination in Networks ICM/FIC RC130003. We thank Timothy Chan for his helpful comments, and one anonymous referee from ESA 2015 for pointing out the relation between our techniques of analysis and the treewidth of the intersection graph.
\end{LONG}


\bibliographystyle{splncs}
\bibliography{references}

\begin{thebibliography}{10}
\providecommand{\url}[1]{\texttt{#1}}
\providecommand{\urlprefix}{URL }

\bibitem{Afshani2009}
Afshani, P., Barbay, J., Chan, T.M.: Instance-optimal geometric algorithms. In:
  Proceedings of the Annual IEEE Symposium on Foundations of Computer Science
  (FOCS). pp. 129--138 (2009)

\bibitem{Agarwal2010}
Agarwal, P.K.: An improved algorithm for computing the volume of the union of
  cubes. In: Proceedings of the Annual Symposium on Computational Geometry
  (SoCG). pp. 230--239. ACM, New York, NY, USA (2010)

\bibitem{Amir10}
Amir, E.: Approximation algorithms for treewidth. Algorithmica  56(4),
  448--479 (2010)

\bibitem{Arnborg198911}
Arnborg, S., Proskurowski, A.: Linear time algorithms for {NP}-hard problems
  restricted to partial k-trees. Journal of Discrete Applied Mathematics (JDAM)
   23(1),  11 -- 24 (1989)

\bibitem{Barbay2013}
Barbay, J.: From time to space: Fast algorithms that yield small and fast data
  structures. In: Space-Efficient Data Structures, Streams, and Algorithms
  (IanFest). pp. 97--111 (2013)

\bibitem{BarbayCNP14}
Barbay, J., Chan, T.M., Navarro, G., P{\'e}rez-Lantero, P.: Maximum-weight
  planar boxes in ${O}(n^2)$ time (and better). Information Processing Letters
  (IPL)  114(8),  437--445 (2014)

\bibitem{BarbayPR15}
Barbay, J., P{\'{e}}rez{-}Lantero, P., Rojas{-}Ledesma, J.: Adaptive
  computation of the klee's measure in high dimensions. CoRR  abs/1505.02855
  (2015), \url{http://arxiv.org/abs/1505.02855}, (Last accessed on 2015-08-28.)

\bibitem{Bentley1977}
Bentley, J.L.: Algorithms for {K}lee's rectangle problems. Unpublished notes
  (1977)

\bibitem{Bodlaender2013}
Bodlaender, H.L., Drange, P.G., Dregi, M.S., Fomin, F.V., Lokshtanov, D.,
  Pilipczuk, M.: A o(c{\^{}}k n) 5-approximation algorithm for treewidth. CoRR
  abs/1304.6321 (2013), \url{http://arxiv.org/abs/1304.6321}, (Last accessed on
  2015-08-22.)

\bibitem{Bringmann2012}
Bringmann, K.: An improved algorithm for {K}lee's measure problem on fat boxes.
  Computational Geometry: Theory and Applications (CGTA)  45(5-6),  225--233
  (2012)

\bibitem{Chan08}
Chan, T.M.: A (slightly) faster algorithm for {K}lee's measure problem. In:
  Proceedings of the Annual Symposium on Computational Geometry (SoCG). pp.
  94--100 (2008)

\bibitem{Chan2013}
Chan, T.M.: {K}lee's measure problem made easy. In: Proceedings of the Annual
  IEEE Symposium on Foundations of Computer Science (FOCS). pp. 410--419 (2013)

\bibitem{dAmoreNRW95}
d'Amore, F., Nguyen, V.H., Roos, T., Widmayer, P.: On optimal cuts of
  hyperrectangles. Computing  55(3),  191--206 (1995)

\bibitem{Edelsbrunner1984}
Edelsbrunner, H., Van~Leeuwen, J., Ottmann, T., Wood, D.: Computing the
  connected components of simple rectilinear geometrical objects in $d$-space.
  RAIRO - Theoretical Informatics and Applications - Informatique Théorique et
  Applications  18(2),  171--183 (1984)

\bibitem{Feige2005}
Feige, U., Hajiaghayi, M., Lee, J.R.: Improved approximation algorithms for
  minimum-weight vertex separators. In: Proceedings of the annual ACM Symposium
  on Theory Of Computing (STOC). pp. 563--572. STOC '05, ACM, New York, NY, USA
  (2005)

\bibitem{Fredman78}
Fredman, M.L., Weide, B.: On the complexity of computing the measure of
  $\cup_1^n [ a_i, b_i]$. Communications of the ACM (CACM)  21(7),  540--544
  (Jul 1978)

\bibitem{Gavril197447}
Gavril, F.: The intersection graphs of subtrees in trees are exactly the
  chordal graphs. Journal of Combinatorial Theory (JCT), Series B  16(1),  47
  -- 56 (1974)

\bibitem{Kirkpatrick1985}
Kirkpatrick, D.G., Seidel, R.: Output-size sensitive algorithms for finding
  maximal vectors. In: Proceedings of the Annual Symposium on Computational
  Geometry (SoCG). pp. 89--96 (1985)

\bibitem{Kirkpatrick1986}
Kirkpatrick, D.G., Seidel, R.: The ultimate planar convex hull algorithm. SIAM
  Journal on Computing (JC)  15(1),  287--299 (Feb 1986)

\bibitem{Klee1977}
Klee, V.: Can the measure of $\cup_1^n [ a_i, b_i]$ be computed in less than
  $o(n \log n)$ steps? The American Mathematical Monthly (AMM)  84(4),
  284--285 (1977)

\bibitem{KleinbergTardos2005}
Kleinberg, J., Tardos, E.: Algorithm Design. Addison-Wesley Longman Publishing
  Co., Inc., Boston, MA, USA (2005)

\bibitem{MoffatP92}
Moffat, A., Petersson, O.: An overview of adaptive sorting. Australian Computer
  Journal (AJC)  24(2),  70--77 (1992)

\bibitem{Overmars1991}
Overmars, M.H., Yap, C.: New upper bounds in {K}lee's measure problem. SIAM
  Journal on Computing (JC)  20(6),  1034--1045 (1991)

\bibitem{Overmars81}
Overmars, M.H.: On the equivalence of rectangle containment, rectangle
  enclosure and {ECDF}-searching. Technical report, University of Utrecht,
  Department of Computer Science (01 1981)

\bibitem{Robertson1986309}
Robertson, N., Seymour, P.: Graph minors. ii. algorithmic aspects of tree-width
   7(3),  309 -- 322 (1986)

\bibitem{Yildiz2012}
Yildiz, H., Suri, S.: On {K}lee's measure problem for grounded boxes. In:
  Proceedings of the Annual Symposium on Computational Geometry (SoCG). pp.
  111--120. ACM, New York, NY, USA (2012)

\end{thebibliography}

\end{document}